\def\shift{6}
\newtheorem{theorem}{Theorem}
\newtheorem{proposition}{Proposition}
\newtheorem{lemma}{Lemma}
\theoremstyle{definition}
\newtheorem{remark}{Remark}
\DeclareMathOperator{\DP}{dp}
\title[Power series expansions for the planar monomer-dimer problem]{Power series expansions for \\ the planar monomer-dimer problem}
\author{Gleb Pogudin}
\address{Institute for Algebra, Johannes Kepler University, Linz, Austria}
\email{pogudin.gleb@gmail.com}
\thanks{This work was supported by the Austrian Science Fund FWF grant Y464-N18.}
\begin{document}

\begin{abstract}
	We compute the free energy of the planar monomer-dimer model.
    Unlike the classical planar dimer model, an exact solution is not known in this case.
    Even the computation of the low-density power series expansion requires heavy and nontrivial computations.
    Despite of the exponential computational complexity, we compute almost three times more terms than were previously known.
    Such an expansion provides both lower and upper bound for the free energy, and allows to obtain more accurate numerical values than previously possible.
    We expect that our methods can be applied to other similar problems.
\end{abstract}

\keywords{the monomer-dimer moodel; symbolic computation; cluster expansion; the free energy}

\maketitle

\section{Introduction}

The exact solution of the closed-packed dimer plane model obtained in \cite{Kasteleyn,Fisher,FisherTemperley} is a fundamental result in statistical mechanics and combinatorics.
In particular, it implies that the number of tilings of $m \times n$ rectangle using dimers grows as $e^{\frac{G}{\pi} mn}$, where $G \approx 0.916$ is the Catalan constant.
Similar results were later obtained for other shapes (see~\cite{CohnKenyonPropp} and references therein).
Applications to physics suggest two natural further questions: what if the dimension of the lattice is higher (i.e. we compute the number of tilings of a hyperrectangle), and what if we consider tilings using both dimers and monomers with a fixed proportion.
For the first question, we refer the reader to~\cite{Ciucu,GamarnikKatz,Abert2015} and references therein.
The second questions originates from the study of liquid mixtures on crystal surfaces in~\cite{FowlerRushbrooke}, see also~\cite{Everett164} for comparison with experimental data.
Monomer-dimer systems also arise in connection with the Ising model and the Heisenberg model, see~\cite[Sect.~5]{HeilmannLieb}.
For both these questions, the exact solution is out of reach so far.
However, even finding the answer numerically leads to very challenging computational problems, because the underlying combinatorial counting problems are very hard, 
and even a small change of the parameters of the problem makes computations much harder or even unfeasible.
For example, in~\cite{Jerrum} it is proved that the monomer-dimer tilings counting problem is \#P-complete in the sense of theoretical computer science.

In this paper we will focus on the second question, namely on the case of planar monomer-dimer tilings with fixed dimer density.
Let us state the problem precisely.
We denote by $a_p(m, n)$ the number of tilings of an $m \times n$ rectangle by monomers and dimers with exactly $\lfloor pmn / 2 \rfloor$ dimers,
where $m, n$ are positive integers, and $p \in [0, 1]$.
Then $p$ is roughly the fraction of the area covered by dimers.
We are interested in the limit
\[
f_2(p) = \lim\limits_{n, m \to \infty} \frac{\ln a_{p}(mn)}{mn}.
\]
In other words, we want to determine the constant $\lambda$ such that $a_p(m, n) \sim e^{\lambda mn}$ as a function of $p$.
From the point of view of statistical mechanics, $f_2(p)$ is equal to the negative of the Helmholtz free energy 
per lattice site expressed in units of the thermal energy $k_BT$.
Some lower and upper bounds for $f_2(p)$ were rigorously proved in~\cite{FriedlandPeled,FriedlandKropLundowMarkstrom}.
However, these bound are not very tight.

Another approach, taken in a series of papers~\cite{Federbush,FederbushFriedland,ButeraFederbushPernici} and independently in~\cite[IV.A]{Kong2} is to expand this function as a power series in $p$  (in some of these papers also expansion with respect to the dimension was discussed).
In the former papers, the authors look for a representation of $f_2(p)$ of the form
\begin{equation}\label{eq:lower_bound}
	f_2(p) = \frac{1}{2} \left( (2\ln 2 - 1)p - p\ln p\right) - (1 - p)\ln (1 - p) + \sum\limits_{j = 2}^\infty a_j p^j.
\end{equation}
In~\cite{Kong2}, the representation is of the form (see details in Section~\ref{sec:reduction_to_grand_canonical})
\begin{equation}\label{eq:upper_bound}
	f_2(p) = \frac{1}{2} \left( (2\ln 2 + 1)p - p\ln p\right) + \sum\limits_{j = 2}^\infty b_j p^j.
\end{equation}
Expanding $(1 - p)\ln(1 - p)$ into a Taylor series in $p$, it is easy to move back and forth between~\eqref{eq:lower_bound} and~\eqref{eq:upper_bound} (see~\eqref{eq:bk_via_ak}).
An important observation is that $a_j > 0$ and $b_j < 0$ for all known $a_j$ and $b_j$.
Under the assumption that this pattern holds for all $j$, truncations of~\eqref{eq:lower_bound} and~\eqref{eq:upper_bound} provide 
lower and upper bound for $f_2(p)$, respectively.
Thus, computing more terms of these series would result in tighter bounds for $f_2(p)$.

Previously, the record in the number of computed terms in~\eqref{eq:lower_bound} and~\eqref{eq:upper_bound} was $23$ (i.e. from $a_2$ till $a_{24}$)
obtained in~\cite[Table II]{ButeraFederbushPernici}. This result is highly nontrivial, because the underlying combinatorial problem has exponential complexity,
so the cost of every next term is usually higher by some factor.
In this paper, we compute $63$ terms (from $a_2$ to $a_{64}$, see Table~\ref{table:ak}) i.e., is almost three times more than what was previously possible.

The contribution of our paper is two-fold.
First, our approach allows us to compute significantly more terms for both series~\eqref{eq:lower_bound} and~\eqref{eq:upper_bound} than were previously known and,
combining them, we obtain very accurate values of $f_2(p)$.
Moreover, we provide additional support for the important conjecture that $a_j > 0$ and $b_j < 0$ for all $j \geqslant 2$.
Second, we show how methods of computer algebra (guessing, modular computation, etc.) can be applied to
study models in statistical mechanics. 
We expect that our methods can be used in other problems of this type (monomer-polymer mixtures, other types of lattices, etc.) in order to push computational limits further.

The rest of the paper is organized as follows.
In Section~\ref{sec:reduction_to_grand_canonical} we collect some known results and approaches that connect power series expansion
of $f_2(p)$ to the combinatorial data.
Section~\ref{sec:combinatorial} contains Theorem~\ref{th:main_combinatorial}, a main combinatorial ingredient of our computation.
Section~\ref{sec:algorithm} contains the description of our algorithm together with all computer algebra machinery used to speed it up.
Finally, in Section~\ref{sec:numerical} we describe our implementation, provide numerical results, and compare them to the previous work.

The author is grateful to Manuel Kauers and Doron Zeilberger for introducing him to the subject and for helpful discussions, and to the referees for the comments on the manuscript.


\section{Reduction to grand-canonical partition function}\label{sec:reduction_to_grand_canonical}

For every fixed $m$ and $n$, consider the \emph{grand-canonical partition function}
\begin{equation}\label{eq:grand_canonical_finite}
\Theta_{m, n}(z) = \sum\limits_{s = 0}^{\lfloor mn / 2 \rfloor} a_{2s / mn}(m, n)z^s.
\end{equation}
We consider a thermodynamic limit of $\Theta_{m, n}(z)$ (its existence is proved in~\cite[VIII]{HeilmannLieb})
\begin{equation}\label{eq:theta_def}
\Theta(z) = \lim\limits_{m, n \to \infty} \left( \Theta_{m, n}(z) \right)^{\frac{1}{mn}}, \text{ and } \ln\Theta(z) = \lim\limits_{m, n \to \infty} \frac{\ln \Theta_{m, n}(z)}{mn}.
\end{equation}
Since $\Theta_{m, n}'(0) = 2mn - m - n$, $\ln\Theta(z) = 2z + O(z^2)$.
Theorems~\cite[8.8A, 8.8B]{HeilmannLieb} rewritten in our notation 
(i.e., replacing $\mu$ with $\ln z$, $g(\mu)$ with $\ln\Theta(z)$, $\rho$ with $\frac{p}{2}$, and $h(\rho)$ with $f_2(p)$) state that
\begin{equation}\label{eq:legendre_first}
	f_2(p) = \inf\limits_{z \in \mathbb{R}_{+}} \left\{ - \frac{p}{2} \ln z + \ln\Theta(z) \right\}.
\end{equation}
We compute $z$, where the expression $- \frac{p}{2} \ln z + \ln\Theta(z)$ is minimal
\[
\frac{\operatorname{d}}{\operatorname{dz}}\left( - \frac{p}{2} \ln z + \ln\Theta(z) \right) = -\frac{p}{2z} + \left( \ln\Theta(z) \right)' = 0,
\]
hence
\begin{equation}\label{eq:p_of_z}
p(z) = 2z \left(\ln\Theta(z)\right)'.
\end{equation}
Since $\Theta(z) = 1 + 2z + O(z^2)$, then $2z \left(\ln\Theta(z)\right)' = 4z + O(z^2)$.
Hence, there exists a unique compositional inverse $z(p)$ of $p(z)$ and $z(p)$ is a formal power series (see~\cite[Th. 1.8]{Lando}).
Moreover, $z(p) = \frac{p}{4} + O(p^2)$.

Due to~\cite[Eq. 8.24]{HeilmannLieb}, $\ln\Theta(z)$ is a convex function in $\ln z$ for all $z \in \mathbb{R}_{+}$.
Then equation~\eqref{eq:legendre_first} can be seen as a fact that $-f_2(p)$ as a function of $\frac{p}{2}$
is a Legendre transform of $\ln\Theta(z)$ as a function of $\ln z$ (for introduction to Legendre transform, see~\cite{ZiaRedishMcKay} and~\cite[\S 14]{Arnold}).
Involutivity of Legendre transform (see~\cite[\S 14.C]{Arnold}) implies that
\begin{equation}
\ln\Theta(z) = \sup\limits_{p \in \mathbb{R}} \left\{ \frac{p}{2} \ln z + f_2(p) \right\},
\end{equation}
and the supremum on the right-hand side is reached at $p = p(z)$.
On the other hand, we can find this $p$ also by differentiation
\[
\frac{\operatorname{d}}{\operatorname{dp}} \left( \frac{p}{2} \ln z + f_2(p) \right) = \frac{\ln z}{2} + f_2'(p) = 0.
\]
Hence, $f_2'(p(z)) = -\frac{\ln z}{2}$. Substituting $z(p)$ into $z$, we obtain $f_2'(p) = -\frac{1}{2} \ln z(p)$.
Integrating with respect to $p$, and using the initial condition $f_2(0) = 0$, we conclude that
\begin{equation}\label{eq:f_via_theta}
	f_2(p) = \frac{1}{2}\int\limits_0^p z(p) \operatorname{dp}, \text{ where } p(z) = 2z \left( \ln\Theta(z) \right)'.
\end{equation}
The same formula is obtained in~\cite[IV.A]{Kong2} using another argument.

Finally, we deduce expansion~\eqref{eq:upper_bound} from~\eqref{eq:f_via_theta}.
Using $z(p) = \frac{p}{4} + O(p^2)$, we conclude that
\[
f_2(p) = \frac{1}{2}\int\limits_{0}^p \ln\left( \frac{p}{4} + O(p^2) \right)\operatorname{dp} = \frac{1}{2}\int\limits_{0}^p \left( \ln p - 2\ln 2 + O(p) \right) \operatorname{dp} = \frac{1}{2}\ln p - \frac{1}{2}p - p \ln 2 + \sum\limits_{k = 2}^{\infty} a_k p^k. 
\]
The latter expression is exactly of the same form as the right-hand side in~\eqref{eq:upper_bound}.


\section{Computation of $\Theta(z)$ using $\Theta_{m, n}(z)$}\label{sec:combinatorial}

The goal of the present section is to prove the following theorem, which provides a way to compute the thermodynamical limit $\ln\Theta(z)$.
\begin{theorem}\label{th:main_combinatorial}
	For every integer $N \geqslant 4$,
    \begin{equation}\label{eq:main_th}
    \ln\Theta(z) - \left( S_N - 3 S_{N - 1} + 3 S_{N - 2} - S_{N - 3} \right) = O(z^{N - 1}),
    \end{equation}
    where $S_M = \sum\limits_{m + n = M} \ln \Theta_{m, n}(z)$.
\end{theorem}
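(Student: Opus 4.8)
\textbf{Proof strategy.} The plan is to pass through the cluster (Mayer) expansion of the monomer--dimer partition function: this rewrites $[z^k]\ln\Theta_{m,n}(z)$ as a \emph{finite} sum over translation classes of connected finite configurations, and one then checks that the third finite difference in $M=m+n$ recovers exactly the bulk (per-site) contribution while annihilating all boundary corrections. Concretely, as an identity of formal power series one has $\ln\Theta_{m,n}(z)=\sum_{W}\phi^{T}(W)\,z^{|W|}$, where $W$ runs over multisets of edges of $P_m\times P_n$, $|W|$ is the total multiplicity, and $\phi^{T}$ is the Ursell function, which vanishes unless the incompatibility graph of $W$ is connected; the latter forces the edge set underlying $W$ to be a connected subgraph of $P_m\times P_n$. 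Grouping by translation class $\Gamma$, with bounding box $w(\Gamma)\times h(\Gamma)$ and order $\|\Gamma\|$ (the total multiplicity), and using that a class of bounding box $w\times h$ has exactly $(m-w+1)_{+}(n-h+1)_{+}$ representatives inside $P_m\times P_n$ (with $x_{+}:=\max(x,0)$), we get
\[
[z^k]\ln\Theta_{m,n}(z)=\sum_{\Gamma:\ \|\Gamma\|=k}\phi(\Gamma)\,(m-w(\Gamma)+1)_{+}\,(n-h(\Gamma)+1)_{+},
\]
which is a finite sum because the connected subgraph underlying $\Gamma$ has at most $\|\Gamma\|$ edges and therefore (it must contain at least $w-1$ horizontal and at least $h-1$ vertical edges to be connected across its bounding box) satisfies $w(\Gamma)+h(\Gamma)\leqslant\|\Gamma\|+2$. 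Dividing by $mn$ and letting $m,n\to\infty$ term by term --- the limit exists by \cite[VIII]{HeilmannLieb} --- gives $\alpha_k:=[z^k]\ln\Theta(z)=\sum_{\Gamma:\ \|\Gamma\|=k}\phi(\Gamma)$.

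Next I would compute the effect of the operator $S_N\mapsto S_N-3S_{N-1}+3S_{N-2}-S_{N-3}$, which is the third backward difference $\Delta^{3}$ in $N$ (with $\Delta F(M)=F(M)-F(M-1)$). Since $[z^k]S_M=\sum_{\Gamma:\|\Gamma\|=k}\phi(\Gamma)\,g_{w(\Gamma),h(\Gamma)}(M)$ with $g_{w,h}(M):=\sum_{m+n=M}(m-w+1)_{+}(n-h+1)_{+}$, it is enough to evaluate $\Delta^{3}g_{w,h}$. A direct summation gives $g_{w,h}(M)=\binom{M-w-h+3}{3}$ for $M\geqslant w+h-3$ and $g_{w,h}(M)=0$ for $M\leqslant w+h-1$; since $\Delta^{3}$ annihilates polynomials of degree $\leqslant 2$ and maps $\binom{M+c}{3}$ to $1$, while $\Delta^{3}g_{w,h}(N)$ only uses the values at $N,N-1,N-2,N-3$, this yields $\Delta^{3}g_{w,h}(N)=\mathbf{1}\{N\geqslant w+h\}$. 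Hence
\[
[z^k]\bigl(S_N-3S_{N-1}+3S_{N-2}-S_{N-3}\bigr)=\sum_{\Gamma:\ \|\Gamma\|=k}\phi(\Gamma)\,\mathbf{1}\{N\geqslant w(\Gamma)+h(\Gamma)\}.
\]
For every $k\leqslant N-2$ and every class $\Gamma$ with $\|\Gamma\|=k$ we have $w(\Gamma)+h(\Gamma)\leqslant k+2\leqslant N$, so all indicators are $1$ and the right-hand side equals $\sum_{\Gamma:\|\Gamma\|=k}\phi(\Gamma)=\alpha_k=[z^k]\ln\Theta(z)$. Therefore $\ln\Theta(z)-\bigl(S_N-3S_{N-1}+3S_{N-2}-S_{N-3}\bigr)$ has vanishing coefficients of $z^{0},z^{1},\dots,z^{N-2}$, i.e.\ it is $O(z^{N-1})$.

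The step I expect to be the main obstacle is the first one: turning the cluster expansion into a rigorous formal power series identity valid for \emph{all} $m,n$ (not only in a disk of convergence), confirming the count $(m-w+1)_{+}(n-h+1)_{+}$ of occurrences of a class with a given bounding box, and --- most importantly --- establishing the finiteness/stabilization encoded in $w+h\leqslant\|\Gamma\|+2$, which is precisely what legitimises interchanging the summations and what forces the coefficient of $z^{k}$ to be the stated polynomial in $m,n$ once $m,n\geqslant k+1$. Granting that combinatorial bookkeeping, the evaluation of $g_{w,h}$ and the elementary behaviour of $\Delta^{3}$ are routine.
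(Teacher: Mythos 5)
Your proposal is correct and follows essentially the same route as the paper: the Mayer/cluster expansion written as a sum over translation classes of connected dimer configurations weighted by $(m-w+1)_{+}(n-h+1)_{+}$, the connectivity bound $w+h\leqslant s+2$, the evaluation $\sum_{m+n=M}(m-w+1)_{+}(n-h+1)_{+}=\binom{(M-w-h+3)_{+}}{3}$, and the observation that the third difference of this binomial equals $1$ once $N\geqslant w+h$. The only difference is presentational (Ursell functions over multisets versus the paper's sums over ordered tuples and connected graphs $\mathcal{C}_s$), which does not affect the argument.
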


In what follows, we will use some properties of the Mayer expansion following~\cite[Section~2.2]{ScottSokal}.

Let $R_{\infty, \infty}$ be the first quadrant of the plane.
We denote the $m \times n$ rectangle whose lower-left corner is the origin by $R_{m, n}$. 
By definition, $R_{m, n} \subset R_{\infty, \infty}$ for all $m$ and $n$.
We denote the set of all dimers in $R_{m, n}$ by $D_{m, n}$.
By definition, the cardinality of $D_{m, n}$ is $2mn - m - n$, and $D_{m, n} \subset D_{\infty, \infty}$ for every $m, n \in \mathbb{Z}_{> 0}$.
For $d_1, d_2 \in D_{\infty, \infty}$, we introduce $W(d_1, d_2)$ by
\[
W(d_1, d_2) = 
\begin{cases}
	1, \text{if } d_1 \text{ and } d_2 \text{ do not overlap,}\\
    0, \text{if } d_1 \text{ and } d_2 \text{ overlap.}
\end{cases}
\]
Using this notation, the grand-canonical partition function introduced in~\eqref{eq:grand_canonical_finite} can be written as (see formula~(1.1a) in~\cite{ScottSokal})
\begin{equation}\label{eq:grand_canonical_W}
  \Theta_{m, n}(z) = \sum\limits_{s = 0}^{\infty} \frac{z^s}{s!} \sum\limits_{(d_1, \ldots, d_s) \in D_{m, n}^s} \left( \prod\limits_{1 \leqslant i < j \leqslant s} W(d_i, d_j) \right),
\end{equation}
where $D_{m, n}^s$ stands for the set of all ordered $s$-tuples of elements of $D_{m, n}$.
Unlike~\eqref{eq:grand_canonical_finite}, formula~\eqref{eq:grand_canonical_W} includes an infinite sum.
However, since among every $\lfloor mn / 2 \rfloor + 1$ dimers there exists at least one pair of overlapping dimers, all terms with $s > \lfloor mn / 2 \rfloor$ vanish.
We introduce $F(d_1, d_2) = W(d_1, d_2) - 1$ for every $d_1, d_2 \in D_{\infty, \infty}$.
Then~\eqref{eq:grand_canonical_W} can be rewritten as (see formula~(2.7) in~\cite{ScottSokal})
\begin{equation}\label{eq:grand_canonical_F}
	\Theta_{m, n}(z) = \sum\limits_{s = 0}^{\infty} \frac{z^s}{s!} \sum\limits_{\mathbf{d} = (d_1, \ldots, d_s) \in D_{m, n}^s} \sum\limits_{G \in \mathcal{G}_s} F(\mathbf{d}, G), \text{ where } F(\mathbf{d}, G) = \prod\limits_{(ij) \in E(G)} F(d_i, d_j),
\end{equation}
and $\mathcal{G}_s$ denotes the set of all graphs on $\{ 1, \ldots, s \}$, and $E(G)$ is the set of edges of a graph $G$.
Changing the order of summation, we obtain $\Theta_{m, n}(z) = \sum\limits_{s = 0}^{\infty} \frac{z^s}{s!} \sum\limits_{G \in \mathcal{G}_s} \mathcal{W}_{m, n}(G)$, where
\begin{equation}\label{eq:defintion_mathcal_W}
  \mathcal{W}_{m, n}(G) = \sum\limits_{\mathbf{d} \in D_{m, n}} F(\mathbf{d}, G).
\end{equation}
In~\cite[p. 1161]{ScottSokal} it is shown that (see formula~(2.11a))
\begin{equation}\label{eq:log_mayer}
  \ln \Theta_{m, n}(z) = \sum\limits_{s = 0}^{\infty} \frac{z^s}{s!} \sum\limits_{G \in \mathcal{C}_s} \mathcal{W}_{m, n}(G), 
\end{equation}
where $\mathcal{C}_s$ is the set of all connected graphs on $\{ 1, \ldots, s \}$.
The above calculations work in quite general context and do not exploit the structure of $D_{m, n}$.
Now we will perform a more careful analysis of~\eqref{eq:defintion_mathcal_W} and~\eqref{eq:log_mayer} in our setting.

For a tuple $\mathbf{d} = (d_1, \ldots, d_s) \in (D_{m, n})^s$, we construct a graph with vertices labeled $1, \ldots, s$ such that there is an edge between $i$ and $j$ if and only if $d_i$ and $d_j$ overlap.
We call a tuple $\mathbf{d}$ \emph{connected} if the corresponding graph is connected.
The set of connected tuples in $D_{m, n}$ of length $s$ is denoted by $\left( D_{m, n} \right)^s_c$.
For $\mathbf{d} \in \left( D_{m, n} \right)^s_c$, we define \emph{the height} (resp., \emph{the width}) of $\mathbf{d}$ to be the number of rows (resp., columns) having nontrivial intersection with at least one of the dimers in $\mathbf{d}$.
We denote it by $h(\mathbf{d})$ (resp., $w(\mathbf{d})$).
Two tuples $\mathbf{d}_1 = (d_1^1, \ldots, d_s^1)$ and $\mathbf{d}_2 = (d_1^2, \ldots, d_s^2)$ are said to be \emph{translation-equivalent} if there exists a translation $\pi$ of the plane by some vector such that $\pi(d_i^1) = d_i^2$ for every $1 \leqslant i \leqslant s$.
This is an equivalence relation, and we write it as $\mathbf{d}_1 \sim \mathbf{d}_2$.

The following facts follow straightforwardly from the definitions
\begin{lemma}\label{lem:tuples_properties}
	\begin{enumerate}
		\item[(i)] For every tuple $(d_1, \ldots, d_s) \in \left(D_{m, n}\right)^s \setminus \left(D_{m, n}\right)^s_c$, the corresponding summand in~\eqref{eq:defintion_mathcal_W} vanishes.
    	\item[(ii)] If $\mathbf{d}_1 = (d_1^1, \ldots, d_s^1)$ and $\mathbf{d}_2 = (d_1^2, \ldots, d_s^2)$ are translation-equivalent, then $F(\mathbf{d}_1, G) = F(\mathbf{d}_2, G)$ for every graph $G \in \mathcal{G}_s$.
    	\item[(iii)] For every connected tuple $\mathbf{d} \in \left( D_{\infty, \infty} \right)^s_c$, the number of tuples $\mathbf{d}' \in \left( D_{m, n} \right)^s_c$ such that $\mathbf{d} \sim \mathbf{d}'$ is exactly
    	\[
    		\left( m - h(\mathbf{d}) + 1 \right)_{+} \cdot \left( n - w(\mathbf{d}) + 1 \right)_{+},
    	\]
        where $(x)_{+} := \max(x, 0)$.
	\end{enumerate}
\end{lemma}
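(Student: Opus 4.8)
The plan is to treat the three assertions separately, since each reduces to an elementary fact about the weight $F$ and the overlap graph of a tuple, the only substantive point being the translate count in (iii). For (i), recall that this summand is used in \eqref{eq:log_mayer}, where $G$ ranges over connected graphs, and that $F(d_i, d_j) = W(d_i, d_j) - 1$ equals $-1$ when $d_i$ and $d_j$ overlap and $0$ otherwise. Hence $F(\mathbf{d}, G) = \prod_{(ij) \in E(G)} F(d_i, d_j)$ is nonzero exactly when every edge of $G$ joins a pair of overlapping dimers, i.e. when $G$ is a subgraph of the overlap graph of $\mathbf{d}$ on the common vertex set $\{1, \ldots, s\}$. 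A connected spanning graph $G$ can be such a subgraph only if the overlap graph is itself connected, that is, only if $\mathbf{d} \in (D_{m, n})^s_c$; taking the contrapositive gives (i).

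For (ii), I would use that ``$d$ and $d'$ overlap'' is a translation-invariant relation: if $\pi$ is a translation of the plane with $\pi(d_i^1) = d_i^2$ and $\pi(d_j^1) = d_j^2$, then $d_i^1, d_j^1$ share a cell if and only if $d_i^2, d_j^2$ do, since $\pi$ is a bijection carrying cells to cells. Therefore $F(d_i^1, d_j^1) = F(d_i^2, d_j^2)$ for all $i, j$, and multiplying over the edges of any $G \in \mathcal{G}_s$ yields $F(\mathbf{d}_1, G) = F(\mathbf{d}_2, G)$.

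The main obstacle is (iii), and the subtlety I would address first is that $h(\mathbf{d})$ is defined as the number of rows that $\mathbf{d}$ meets, whereas counting placements naturally involves the span of those rows; the two agree only once one knows the occupied rows are consecutive. I would therefore prove that for a connected tuple the occupied rows form an interval of exactly $h(\mathbf{d})$ consecutive integers, and likewise the occupied columns form a block of width $w(\mathbf{d})$. Indeed, two overlapping dimers share a cell, so the sets of rows they occupy, each of size one or two, have a common element; chaining this along a path in the connected overlap graph exhibits the union of all occupied-row sets as a union of overlapping length-one-or-two intervals, hence as a single interval, and symmetrically for columns.

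Granting this, I would finish by counting integer translates. A translate of $\mathbf{d}$ lies in $D_{m, n}^s$ precisely when its row-block fits among the $m$ rows and its column-block fits among the $n$ columns of $R_{m, n}$, so the admissible vertical offsets number $(m - h(\mathbf{d}) + 1)_+$ and the admissible horizontal offsets number $(n - w(\mathbf{d}) + 1)_+$, with none available once a block is too large. Because overlap, and hence connectedness, is translation invariant by the argument used for (ii), every such translate again lies in $(D_{m, n})^s_c$; and distinct integer offsets yield distinct tuples, since a nonzero translation fixes no dimer. Multiplying the independent vertical and horizontal counts then gives the asserted product.
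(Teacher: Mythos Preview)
Your proof is correct. The paper itself does not prove this lemma at all: it merely states that ``the following facts follow straightforwardly from the definitions'' and moves on, so your argument is precisely the kind of routine verification the paper intended the reader to supply. Your handling of part~(i) is the right one---the statement as written is only true for connected $G$, and you correctly identify that this is the sole case needed in~\eqref{eq:log_mayer} and in the subsequent rewriting~\eqref{eq:W_rewrite}. Your observation in~(iii), that one must check the occupied rows and columns each form a contiguous block before the translate count makes sense, is a genuine point that the paper glosses over; your chaining argument via the overlap graph is the natural way to establish it.
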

We denote by $\mathcal{T}_s$ a set of tuples in $\left(D_{\infty, \infty}\right)^s_c$ that contains exactly one representative of every equivalence class of
translation-equivalent connected tuples.
Due to Lemma~\ref{lem:tuples_properties} we can rewrite~\eqref{eq:defintion_mathcal_W} as
\begin{align}
\label{eq:W_rewrite}
  \mathcal{W}_{m, n}(G) &= \sum\limits_{\mathbf{d} \in D_{m, n}} F(\mathbf{d}, G)
  \stackrel{(i)}{=} \sum\limits_{\mathbf{d} \in \left(D_{m, n}\right)^s_c} F(\mathbf{d}, G) \\
  &\stackrel{(ii)}{=} \sum\limits_{\mathbf{d} \in \mathcal{T}_s} \left( \sum\limits_{\mathbf{d}' \in D_{m, n}^s, \mathbf{d}' \sim \mathbf{d}} F(\mathbf{d}', G) \right) \nonumber \\
  &\stackrel{(iii)}{=} \sum\limits_{\mathbf{d} \in \mathcal{T}_s} \left( m - h(T) + 1 \right)_{+} \cdot \left( n - w(T) + 1 \right)_{+} \cdot F(\mathbf{d}, G).\nonumber
\end{align}

For $\mathbf{d} \in \left( D_{m, n} \right)_c^s$, we define $\mathcal{W}(\mathbf{d}) = \sum\limits_{G \in \mathcal{C}_s} F(\mathbf{d}, G)$.
Using this notation and~\eqref{eq:W_rewrite}, we can rewrite~\eqref{eq:log_mayer} as
\begin{align*}
  &\ln \Theta_{m, n}(z) = \sum\limits_{s = 0}^{\infty} \frac{z^s}{s!} \sum\limits_{G \in \mathcal{C}_s} \mathcal{W}_{m, n}(G) \\ 
  &= \sum\limits_{s = 0}^{\infty} \frac{z^s}{s!} \sum\limits_{G \in \mathcal{C}_s} \left( \sum\limits_{\mathbf{d} \in \mathcal{T}_s} \left( m - h(\mathbf{d}) + 1 \right)_{+} \cdot \left( n - w(\mathbf{d}) + 1 \right)_{+} \cdot F(\mathbf{d}, G) \right) \\ 
  &= \sum\limits_{s = 0}^{\infty} \frac{z^s}{s!} \sum\limits_{\mathbf{d} \in \mathcal{T}_s} \left( m - h(\mathbf{d}) + 1 \right)_{+} \cdot \left( n - w(\mathbf{d}) + 1 \right)_{+} \cdot \mathcal{W}(\mathbf{d}).
\end{align*}

Hence
\begin{equation}\label{eq:log_t}
  \ln \Theta_{m, n}(z) = \sum\limits_{s = 0}^{\infty} \frac{z^s}{s!} \sum\limits_{\mathbf{d} \in \mathcal{T}_s} \left( m - h(\mathbf{d}) + 1 \right)_{+} \cdot \left( n - w(\mathbf{d}) + 1 \right)_{+} \cdot \mathcal{W}(\mathbf{d}).
\end{equation}

Now we want to obtain a similar expression for $\ln\Theta(z)$ defined in~\eqref{eq:theta_def}
\[
  \ln\Theta(z) = \lim\limits_{m, n \to \infty} \frac{\ln\Theta_{m, n}(z)}{mn} = \sum\limits_{s = 0}^{\infty} \frac{z^s}{s!} \sum\limits_{\mathbf{d} \in \mathcal{T}_s} \lim\limits_{m, n \to \infty} \left( \frac{ \left( m - h(\mathbf{d}) + 1 \right)_{+} \left( n - w(\mathbf{d}) + 1 \right)_{+} }{mn} \right) \mathcal{W}(\mathbf{d}).
\]
Since $\lim\limits_{m, n \to \infty} \frac{ \left( m - h(\mathbf{d}) + 1 \right)_{+} \cdot \left( n - w(\mathbf{d}) + 1 \right)_{+} }{mn} = 1$, we obtain
\begin{equation}\label{eq:log_t_lim}
  \ln\Theta(z) = \sum\limits_{s = 0}^\infty \frac{z^s}{s!} \sum\limits_{\mathbf{d} \in \mathcal{T}_s} \mathcal{W}(\mathbf{d}).
\end{equation}

We are now ready to deduce Theorem~\ref{th:main_combinatorial} from~\eqref{eq:log_t} and~\eqref{eq:log_t_lim}.
\begin{lemma}\label{lem:S_N}
  For every $N \in \mathbb{Z}_{> 0}$
  \[
  S_N = \sum\limits_{s = 0}^{\infty} \frac{z^s}{s!} \sum\limits_{\mathbf{d} \in \mathcal{T}_s} \binom{ (N - w(\mathbf{d}) - h(\mathbf{d}) + 3)_{+} }{3} \mathcal{W}(\mathbf{d}).
  \]
\end{lemma}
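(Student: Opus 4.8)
The plan is to substitute the already-established expansion~\eqref{eq:log_t} for each $\ln\Theta_{m,n}(z)$ into the definition $S_N=\sum_{m+n=N}\ln\Theta_{m,n}(z)$ and then interchange the order of summation. Since a connected tuple of $s$ dimers meets at most $O(s)$ rows and columns, the set $\mathcal{T}_s$ is finite for every $s$, so for each power of $z$ only finitely many terms occur and the interchange is harmless. After interchanging, the contribution of a given $\mathbf{d}\in\mathcal{T}_s$ to the coefficient of $\frac{z^s}{s!}$ in $S_N$ is $\Sigma_N\bigl(h(\mathbf{d}),w(\mathbf{d})\bigr)\,\mathcal{W}(\mathbf{d})$, where
\[
\Sigma_N(h,w) := \sum_{\substack{m+n=N\\ m,n\geqslant 1}}(m-h+1)_{+}\,(n-w+1)_{+}.
\]
Thus the lemma is equivalent to the arithmetic identity $\Sigma_N(h,w)=\binom{(N-h-w+3)_{+}}{3}$, which I claim holds for every positive integer $N$ and all integers $h,w\geqslant 1$. (The $s=0$ term may be ignored on both sides: $S_N$ has no constant term since $\ln\Theta_{m,n}(0)=0$, and every $\mathbf{d}\in\mathcal{T}_s$ with $s\geqslant 1$ has $h(\mathbf{d}),w(\mathbf{d})\geqslant 1$.)

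To establish the identity, observe that a summand of $\Sigma_N(h,w)$ is nonzero only when $m\geqslant h$ and $n\geqslant w$, in which case it equals $(m-h+1)(n-w+1)$; because $h,w\geqslant 1$, every such $m$ already lies in $[1,N-1]$, so the exact index set of the outer sum is irrelevant. Writing $j=m-h$ and $K=N-h-w$, the surviving range is $0\leqslant j\leqslant K$ and the summand becomes $(j+1)(K+1-j)$, so
\[
\Sigma_N(h,w)=\sum_{j=0}^{K}(j+1)(K+1-j),
\]
with the convention that an empty sum (the case $K<0$) equals $0$. A routine evaluation (substitute $k=j+1$ and use the closed forms for $\sum k$ and $\sum k^2$) gives $\sum_{j=0}^{K}(j+1)(K+1-j)=\binom{K+3}{3}$ whenever $K\geqslant 0$. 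It then remains to reconcile the boundary: for $K<0$ the left-hand side is $0$, while $\binom{(K+3)_{+}}{3}$ equals one of $\binom{0}{3},\binom{1}{3},\binom{2}{3}$, all of which vanish; hence $\Sigma_N(h,w)=\binom{(N-h-w+3)_{+}}{3}$ without exception. Combined with the first paragraph this proves the lemma.

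The real work here is purely bookkeeping: keeping the truncations $(\cdot)_{+}$ straight and not overlooking the degenerate cases $K\in\{-1,-2\}$, together with the (easy) justification that the interchange of sums is legitimate. The algebraic heart — summing $(j+1)(K+1-j)$ to $\binom{K+3}{3}$ — is entirely standard, and I do not expect any genuine difficulty there.
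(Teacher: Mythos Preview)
Your proof is correct and follows essentially the same route as the paper: both reduce the coefficient of $\frac{z^s}{s!}\mathcal{W}(\mathbf{d})$ to the sum $\sum_{m+n=N}(m-h+1)_+(n-w+1)_+$, split off the case $N<h+w$, and in the remaining case reindex to obtain $\sum_{k=1}^{K+1}k(K+2-k)$ and evaluate it to $\binom{K+3}{3}$ via the closed forms for $\sum k$ and $\sum k^2$. Your write-up is slightly more explicit about the legitimacy of interchanging sums and about why the boundary values $\binom{0}{3},\binom{1}{3},\binom{2}{3}$ vanish, but the argument is the same.
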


\begin{proof}
	By Lemma~\ref{lem:tuples_properties}, the coefficient of $\frac{z^s}{s!}\mathcal{W}(\mathbf{d})$ is equal to
    \[
    \sum\limits_{m + n = N}(m - h(\mathbf{d}) + 1)_{+} \cdot (n - w(\mathbf{d}) + 1)_{+}.
    \]
    If $N < p := w(\mathbf{d}) + h(\mathbf{d})$, the above expression is equal to $0 = \binom{ (N - w(\mathbf{d}) - h(\mathbf{d}) + 3)_{+} }{3}$.
    Otherwise, it is equal to
    \[
    \sum\limits_{k = 1}^{N - p + 1} k \left(N - p + 2 - k \right) = \left(N - p + 2 \right) \left( \sum\limits_{k = 1}^{N - p + 1} k \right) - \left(\sum\limits_{k = 1}^{N - p + 1} k^2\right).
    \]
    It can be verified by direct computation using the formula for the sum of squares that the latter expression is equal to $\binom{N - w(\mathbf{d}) - h(\mathbf{d}) + 3}{3}$.
    This proves the lemma.
\end{proof}

Fix some $s \leqslant N - 2$ and $\mathbf{d} \in \mathcal{T}_s$.
We will prove that all summands of the form $\frac{z^s}{s!}\mathcal{W}(\mathbf{d})$ on the left-hand side of~\eqref{eq:main_th} cancel.
Since $\mathbf{d}$ is connected, it contains at least $w(\mathbf{d}) - 1$ horizontal dimers and at least $h(\mathbf{d}) - 1$ vertical dimers.
Hence, $w(\mathbf{d}) + h(\mathbf{d}) - 2 \leqslant s \leqslant N - 2$, so $p := w(\mathbf{d}) + h(\mathbf{d}) \leqslant N$.
This inequality together with Lemma~\ref{lem:S_N} and~\eqref{eq:log_t_lim} implies that the coefficient of $\frac{z^s}{s!}\mathcal{W}(\mathbf{d})$ on the left-hand side of~\eqref{eq:main_th} is equal to
\[
1 - \left( \binom{N - p + 3}{3} - 3\binom{N - p + 2}{3} + 3\binom{N - p + 1}{3} - \binom{N - p}{3} \right).
\]
Expanding the brackets, we verify that this expression is zero for every $N - p \geqslant 0$.
This concludes the proof of Theorem~\ref{th:main_combinatorial}.


\section{Description of the algorithm}\label{sec:algorithm}

\subsection{General algorithm.} 
Combining~\eqref{eq:f_via_theta} and Theorem~\ref{th:main_combinatorial}, we obtain Algorithm~\ref{alg:draft}, 
the first version of an algorithm for computing the first $n$ terms of $f_2(p)$.
Note that
\begin{itemize}
	\item line~\ref{line:compute_Theta} is correct due to Theorem~\ref{th:main_combinatorial};
    \item line~\ref{line:compute_f2} is correct due to~\eqref{eq:f_via_theta};
    \item procedure $\operatorname{ComputeTheta}$ is described in subsection~\ref{subsec:compute_theta};
    \item procedure $\operatorname{InversePowerSeries}(a(z))$ computes a power series $z(p)$ given a power series $p(z)$ (see~\cite[Th. 1.8]{Lando}).
\end{itemize}

\begin{algorithm}
\caption{Nonoptimized version of the algorithm}\label{alg:draft}
	\KwIn{Nonnegative integer $n$.}
    \KwOut{$f_2(p)$ modulo $O(p^n)$.}

  \For{$i$ from $1$ to $\left\lfloor \frac{n + 1}{2} \right\rfloor$}{
      $[\Theta_{i, 1}(z), \ldots, \Theta_{i, n_0 + 1 - i}(z)] := \operatorname{ComputeTheta}(i, n_0 + 1 - i);$\\
      \For{$j$ from $1$ to $n_0 + 1 - i$}{
        $\Theta_{j, i}(z) := \Theta_{i, j}(z);$
      }
  }
  
  \For{$k$ from $0$ to $3$}{
  	$S_{n + 1 - k} := \sum\limits_{i + j = n + 1 - k} \ln\Theta_{i, j} (z);$
  }
  
  $\ln\Theta(z) := S_{n + 1} - 3S_n + 3S_{n - 1} - S_{n - 2};$\label{line:compute_Theta}\\
  $z(p) := \operatorname{InversePowerSeries}\left(2z \left( \ln\Theta(z) \right)' \right);$\\
  $f_2(p) := -\frac{1}{2} \int \ln z(p) \DP;$\label{line:compute_f2}\\
  \Return $f_2(p)$;
\end{algorithm}

Several improvements can be made:
\begin{itemize}
	\item Computation of $\operatorname{ComputeTheta}(i, j)$ deals with a very long vector of possibly very large numbers (see~Section~\ref{subsec:compute_theta}).
    In order to fit into the memory, we perform computation modulo several primes and use chinese remaindering and rational reconstruction to obtain the final result (see~Section~\ref{subsec:rational_reconstruction}).
    
    \item The output of Algorithm~\ref{alg:draft} with input $n$, let us call it $\tilde{g}_n(p)$, coincides with $f_2(p)$ only modulo $O(z^n)$.
    Nevertheless, the first few nonzero coefficients of $f_2(p) - \tilde{g}_n(p)$ turn out to satisfy linear recurrence relations with respect to $n$, so
   they can be computed easily. This allows us to ``correct'' these terms and obtain a more precise result. See Section~\ref{subsec:correction} for further details.
    
    \item Since we need only the first $n$ terms of $\Theta(z)$, it is sufficient to compute only the first $n$ terms for every computed $\Theta_{i, j}(z)$.
    Therefore, all intermediate polynomials can also be truncated.
\end{itemize}

With these improvements, we obtain the final version of our algorithm.
For more details, see the source code (see Section~\ref{subsec:implement}).

  
  


\subsection{Computation of $\Theta_{m, n}(z)$.}\label{subsec:compute_theta}
We will compute $\Theta_{m, n}(z)$ using an optimization of \textit{the transfer--matrix method} (see~\cite[\S 4.7]{StanleyVol1}).
Fix a positive integer $m$.
Let $n$ be a nonnegative integer, and $0 \leqslant N < 2^m$.
Viewing $N$ as a vector of $m$ bits, we denote the $i$-th bit of $N$ by $N[i]$.
We denote by $F_N^{(m, n)}$ the polygon obtained form the $m \times n$ rectangle by adding one additional cell (we will call it \emph{an external} cell) to the
end of every row such that $N[i] = 1$, where $i$ is the index of the row.
For example, $F_5^{(4, 6)}$ is shown on Figure~\ref{fig:F5_6}. In particular, $F_0^{(m, n + 1)}$ is the same as $F_{2^m - 1}^{(m, n)}$.

\begin{figure}
\begin{tikzpicture}
  \node[] at (-0.2, 1.75) {\textbf{1}};
  \node[] at (-0.2, 1.25) {\textbf{2}};
  \node[] at (-0.2, 0.75) {\textbf{3}};
  \node[] at (-0.2, 0.25) {\textbf{4}};
  \draw[step=0.5cm,gray,very thin,dotted] (3, 0) grid (3.5,2);
  \draw[step=0.5cm] (0, 0) grid (3,2);
  \draw[step=0.5cm] (3, 0) grid (3.5,0.5);
  \draw[step=0.5cm] (3, 0.99) grid (3.5,1.5);
\end{tikzpicture}
\caption{$F_5^{(6)}$}\label{fig:F5_6}
\end{figure}

We introduce polynomial $P_N^{(m, n)} (z)$ to be a generating function for the number of tilings of $F_N^{(m, n)}$ 
such that every external cell is covered by a horizontal dimer, i.e. 
$P_N^{(m, n)}(z) = \sum\limits_{j = 0}^{m(n + 1)} a_{N, j}^{(m, n)} z^j$,
where $a_{N, j}^{(m, n)}$ is the number of monomer-dimer tilings of $F_N^{(m, n)}$ with exactly $j$ dimers such that every external cell is covered by a horizontal dimer.
We will call such tilings \textit{rigid}.
In particular, $\Theta_{m, n}(z) = P_{0}^{(m, n)}(z)$.
We denote by $P^{(m, n)}$ the vector $(P_0^{(m, n)}(z), \ldots, P_{2^m - 1}^{(m, n)}(z))$.

\begin{remark}
  It can be shown (using techniques from~\cite[\S V.6.]{FlajoletSedgewick}), 
  that there exists a matrix $M$ with entries in $\mathbb{Z}[z]$ such that $P^{(m, n + 1)} = MP^{(m, n)}$.
  Hence, $\Theta_{m, n}(z)$ can be computed as the first coordinate of $M^n P^{(m, 0)}$.
  However, in our computations $m$ can be any natural number up to $30$, 
  so $M$ can have $2^{30} \times 2^{30} = 2^{60} \approx 10^{18}$ entries.
  Luckily, the matrix $M$ is highly structured (see~\cite{Lieb}), so there exists
  a faster algorithm for computing $P^{(m, n + 1)}$ from $P^{(m, n)}$.
\end{remark}

We present an algorithm (Algorithm~\ref{alg:multiplication_transfer}) that computes $P^{(m, n + 1)}$ from $P^{(m, n)}$ in-place 
(i.e. with $O(1)$ additional space) using $O(m2^m)$ arithmetic operations.
We denote the number of ones in the binary representation of $N$ by $\operatorname{BinDig}(N)$.

\begin{algorithm}
\caption{Computing $P^{(m, n + 1)}$ from $P^{(m, n)}$.}\label{alg:multiplication_transfer}
	\KwIn{Vector $P = (P_{0}^{(m, n)}(z), \ldots, P_{2^m - 1}^{(m, n)}(z))$.}
    \KwOut{Vector $P = (P_{0}^{(m, n + 1)}(z), \ldots, P_{2^m - 1}^{(m, n + 1)}(z))$.}

    \For{$N$ from $0$ to $2^{m - 1} - 1$\label{line:reverse}}{
      Swap values $P[N]$ and $P[2^m - 1 - N];$
    }
    
    \For{$j$ from $1$ to $m$\label{line:loop_bits}}{
      \For{$N$ from $0$ to $2^m - 1$}{
        \If{$N[j] = 0$}{
          $P[N] \mathrel{+}= P[N + 2^{m - j}];$\label{line:monomer}
        }
        \If{$N[j] = 0$ and $j > 1$ and $N[j - 1] = 0$}{
          $P[N] \mathrel{+}= z \cdot P[N + 2^{m - j} + 2^{m - j + 1}];$\label{line:vert_dimer}
        }
      }
      }
      \For{$N$ from $0$ to $2^m - 1$}{
        $d:= \operatorname{BinDig}(N);$\\
        $P[N] := z^d \cdot P[N];$\label{line:mult_by_bd}\\
      }
      \Return $P;$
\end{algorithm}

\begin{proposition}
	Algorithm~\ref{alg:multiplication_transfer} is correct.
\end{proposition}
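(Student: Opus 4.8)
The plan is to track how Algorithm~\ref{alg:multiplication_transfer} transforms the generating function $P_N^{(m,n)}(z)$ into $P_N^{(m,n+1)}(z)$ by interpreting each loop as accounting for the tilings of the newly added column. Recall that $P_N^{(m,n)}(z)$ counts rigid monomer-dimer tilings of $F_N^{(m,n)}$, i.e.\ tilings in which every external cell (attached to a row $i$ with $N[i]=1$) is covered by a horizontal dimer. Passing from $n$ to $n+1$ means we add a fresh column of $m$ cells to the right of the old $m\times n$ rectangle; the old external cells, if any, now lie inside the new column, and each such cell was forced to be the left half of a horizontal dimer, so its right half sits in the new column. This is precisely the bookkeeping that the first loop (line~\ref{line:reverse}) performs: it sends the vector indexed by $N$ to the vector indexed by the complement $2^m-1-N$, because a row that had an external cell now has its new-column cell already occupied (bit $0$ available for further filling), while a row that had no external cell now has a completely empty new-column cell (bit $1$, still to be decided). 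So after line~\ref{line:reverse}, $P[N]$ is the generating function for rigid tilings of $F_{2^m-1-N}^{(m,n)}$ with the convention that the bits of $N$ now mark which cells of the new column are still \emph{empty}.

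Next I would analyze the double loop over $j$ (line~\ref{line:loop_bits}). The idea is that this loop fills the empty cells of the new column, from row $1$ downward, allowing in each empty cell either a monomer or — if the cell below is also empty — a vertical dimer. Concretely, I claim the invariant that after processing bit $j$, for each $N$ the entry $P[N]$ is the generating function for all ways to tile the old rectangle (rigidly) together with a partial filling of rows $1,\dots,j$ of the new column by monomers and vertical dimers, such that the bit pattern $N$ restricted to rows $1,\dots,j$ records exactly which of those cells are left uncovered (to eventually become external cells / left-halves of horizontal dimers into column $n+2$), and $N$ restricted to rows $j{+}1,\dots,m$ still records the original emptiness. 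The update at line~\ref{line:monomer} adds the contribution of putting a monomer in cell $j$ (when $N[j]=0$, meaning we now declare cell $j$ covered, so we absorb the count from the state where it was empty, $N+2^{m-j}$, with no extra $z$); the update at line~\ref{line:vert_dimer} adds a vertical dimer occupying rows $j{-}1$ and $j$ (requiring $j>1$, $N[j]=0$, $N[j-1]=0$, and contributing one extra $z$ from the state $N+2^{m-j}+2^{m-j+1}$ where both cells were empty). One has to check carefully that the order of the two $\texttt{if}$-statements and the fact that the loop over $N$ runs upward do not cause a state to be updated from an already-updated state: since line~\ref{line:monomer} reads index $N+2^{m-j}$ which has bit $j$ equal to $1$ and hence is never a target of the $N[j]=0$ updates at the same $j$, and line~\ref{line:vert_dimer} reads $N+2^{m-j}+2^{m-j+1}$ which has bit $j$ equal to $1$ as well, both source indices are ``frozen'' at stage $j$, so the in-place update is legitimate.

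Finally, the last loop (line~\ref{line:mult_by_bd}) multiplies $P[N]$ by $z^{\operatorname{BinDig}(N)}$. After the $j$-loop, $N$ records exactly the set of new-column cells that are still uncovered; in the polygon $F_N^{(m,n+1)}$ these are the external cells, and rigidity demands each is the left half of a horizontal dimer protruding into column $n+2$. Each such dimer contributes one more $z$, and there are $\operatorname{BinDig}(N)$ of them, which is exactly the factor applied. Putting the three stages together shows the output vector equals $(P_0^{(m,n+1)}(z),\dots,P_{2^m-1}^{(m,n+1)}(z))$, as claimed. The main obstacle is the middle step: precisely formulating the loop invariant for the $j$-loop and verifying that it is preserved, including the edge cases $j=1$ (no vertical dimer possible, matching the $j>1$ guard) and the interaction between a vertical dimer placed at stage $j$ (covering row $j-1$) and the monomer/dimer decisions already made at stage $j-1$ — one must confirm the guard $N[j-1]=0$ correctly excludes configurations where row $j-1$ was already covered, so no cell is double-counted. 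Everything else is routine bookkeeping about which bit of an index is set.
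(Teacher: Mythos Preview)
Your proof is correct and follows essentially the same inductive strategy as the paper: both arguments establish a loop invariant for the $j$-loop describing which tilings $P[N]$ encodes at each stage, with your ``bits record which new-column cells are still uncovered'' being exactly the paper's $A_j$ property phrased from the complementary side (the paper works with $\widetilde P_N=z^{\operatorname{BinDig}(N)}P[N]$ counting rigid tilings of $F_N^{(m,n+1)}$ whose rows $>j$ have rightmost cell horizontal, which is your description plus the $\operatorname{BinDig}(N)$ protruding dimers already attached). Your explicit verification that the in-place updates are safe---because the source indices always have bit $j$ set and hence are never targets at stage $j$---is a point the paper leaves implicit, so that addition is welcome.
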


\begin{proof}
  We will prove by induction on $j$ that after the $j$-th iteration of the loop in line~\ref{line:loop_bits} (for $j = 0$ it means the moment just before the first iteration)
  $\widetilde{P}_N := z^{\operatorname{BinDig}(N)} \cdot P[N]$ is the generating polynomial for the number of monomer-dimer tilings of $F_N^{(m, n)}$ satisfying the following
  $A_j$ property:
  
  \textit{\underline{$A_j$ property:} the tiling is rigid, and the right-most cell in rows with the number greater than $j$ is covered by a horizontal dimer.
  }
  
  First we prove the base case, where $j = 0$. Due to the loop in line~\ref{line:reverse}, $P[N] = P_{2^m - 1 - N}^{(m, n)}(z)$.
  Since the binary representation of $2^m - 1 - N$ can be obtained from the binary representation of $N$ by inverting all $m$ bits,
  adding a horizontal dimer to the end of every row without an external cell provides us a bijection between the set of rigid tilings of $F_{2^m - 1 - N}^{(m, n)}$
  and the set of tilings of $F_N^{(m, n + 1)}$ with $A_0$ property (see Fig.~\ref{fig:induction_base}).
  This map adds $\operatorname{BinDig}(N)$ new dimers, so the corresponding generating polynomials differ by the factor $z^{\operatorname{BinDig}(N)}$.

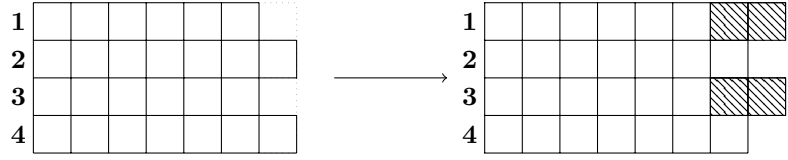
\begin{figure}[h!]
\begin{tikzpicture}
  \node[] at (-0.2, 1.75) {\textbf{1}};
  \node[] at (-0.2, 1.25) {\textbf{2}};
  \node[] at (-0.2, 0.75) {\textbf{3}};
  \node[] at (-0.2, 0.25) {\textbf{4}};
  \draw[step=0.5cm,gray,very thin,dotted] (3, 0) grid (3.5,2);
  \draw[step=0.5cm] (0, 0) grid (3,2);
  \draw[step=0.5cm] (3, 0) grid (3.5,0.5);
  \draw[step=0.5cm] (3, 0.99) grid (3.5,1.5);
  
  \draw[->] (4, 1) -- (\shift - 0.5,1);
  
  \node[] at (-0.2 + \shift, 1.75) {\textbf{1}};
  \node[] at (-0.2 + \shift, 1.25) {\textbf{2}};
  \node[] at (-0.2 + \shift, 0.75) {\textbf{3}};
  \node[] at (-0.2 + \shift, 0.25) {\textbf{4}};
  \draw[step=0.5cm] (-0.01 + \shift, 0) grid (3 + \shift,2);
  \draw[step=0.5cm] (3 + \shift, 0) grid (3.5 + \shift,0.5);
  \draw[step=0.5cm] (3 + \shift, 0.99) grid (3.5 + \shift,1.5);
  \draw[step=0.5cm] (3 + \shift, 0.5) grid (4 + \shift,1);
  \draw[pattern=north west lines] (3 + \shift, 0.5) rectangle (3.5 + \shift,1);
  \draw[pattern=north west lines] (3.5 + \shift, 0.5) rectangle (4 + \shift,1);
  \draw[step=0.5cm] (3 + \shift, 1.5) grid (4 + \shift,2);
  \draw[pattern=north west lines] (3 + \shift, 1.5) rectangle (3.5 + \shift,2);
  \draw[pattern=north west lines] (3.5 + \shift, 1.5) rectangle (4 + \shift,2);
\end{tikzpicture}
\caption{Tilings of $F_5^{(6)}$ to tilings of $F_{10}^{(7)}$ with property $A_0$}\label{fig:induction_base}
\end{figure}

  Assume now that $j > 0$. For $N$ such that $N[j] = 1$, properties $A_{j - 1}$ and $A_{j}$ are the same, so the corresponding component of vector $P$
  should not be changed. Assume that $N[j] = 0$. We denote the last cell of the $j$-th row in $F_N^{(m, n + 1)}$ by $c$.
  Consider an arbitrary monomer-dimer tilings of $F_N^{(m, n + 1)}$ with property $A_j$.
  There are three options for $c$:
  \begin{enumerate}
  	\item Cell $c$ is covered by a horizontal dimer. Then this tiling has also property $A_{j - 1}$ and is already counted in $z^{\operatorname{BinDig}(N)} P[N]$.
    
    \item Cell $c$ is covered by a monomer. Replacing this monomer by a horizontal dimer, we establish a bijection between such tilings of $F_N^{(m, n)}$ and tilings of $F_{N + 2^{m - j}}^{(m, n + 1)}$ with property $A_{j - 1}$ (see Figure~\ref{fig:induction_c_monomer}). 
    Due to the induction hypothesis, the generating polynomial for the latter is $\widetilde{P}_{N + 2^{m - j}}$.
    Hence, in order to take into account tilings where $c$ is covered by a monomer, we should add $\frac{1}{z} \widetilde{P}_{N + 2^{m - j}}$ to $\widetilde{P}_N$.
    This is equivalent to $P[N] \mathrel{+}= P[N + 2^{m - j}]$ in  line~\ref{line:monomer}.
    
\begin{figure}[h!]
\begin{tikzpicture}  
  \node[] at (-0.2, 1.75) {\textbf{1}};
  \node[] at (-0.2, 1.25) {\textbf{2}};
  \node[] at (-0.2, 0.75) {\textbf{3}};
  \node[] at (-0.2, 0.25) {\textbf{4}};
  \draw[step=0.5cm] (-0.01, 0) grid (3.5,2);
  \draw[pattern=north west lines] (3, 1.5) rectangle (3.5,2);
  \draw[pattern=north west lines] (3.5, 1.5) rectangle (4,2);
  \draw[line width=1mm] (3, 0.5) rectangle (3.5, 1);
  \node[] at (3.25,0.75) {\textbf{c}};
  
  \draw[->] (4.5, 1) -- (\shift - 0.5,1);

  \node[] at (-0.2 + \shift, 1.75) {\textbf{1}};
  \node[] at (-0.2 + \shift, 1.25) {\textbf{2}};
  \node[] at (-0.2 + \shift, 0.75) {\textbf{3}};
  \node[] at (-0.2 + \shift, 0.25) {\textbf{4}};
  \draw[step=0.5cm] (-0.01 + \shift, 0) grid (3 + \shift,2);
  \draw[step=0.5cm] (3 + \shift, 0) grid (3.5 + \shift,0.5);
  \draw[step=0.5cm] (3 + \shift, 0.99) grid (3.5 + \shift,1.5);
  \draw[step=0.5cm] (3 + \shift, 1.5) grid (4 + \shift,2);
  \draw[pattern=north west lines] (3 + \shift, 1.5) rectangle (3.5 + \shift,2);
  \draw[pattern=north west lines] (3.5 + \shift, 1.5) rectangle (4 + \shift,2);
  \draw[step=0.5cm] (3 + \shift, 0.5) grid (4 + \shift,1);
  \draw[pattern=north west lines] (3 + \shift, 0.5) rectangle (3.5 + \shift,1);
  \draw[pattern=north west lines] (3.5 + \shift, 0.5) rectangle (4 + \shift,1);
  \draw[line width=1mm] (3 + \shift, 0.5) rectangle (4 + \shift, 1);
\end{tikzpicture}
\caption{Tilings of $F_{8}^{(7)}$ with property $A_3$ to tilings of $F_{10}^{(7)}$ with property $A_2$}\label{fig:induction_c_monomer}
\end{figure}
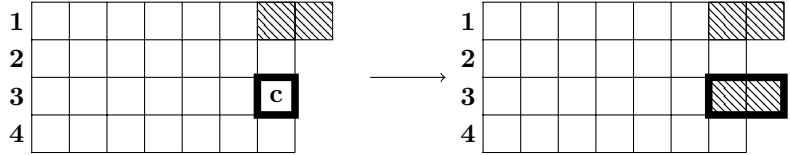
    
    \item Cell $c$ is covered by a vertical dimer. This dimer cannot cover also the cell below $c$ due to $A_{j}$ property.
    Hence, it covers $c$ and the cell above, say $d$, so $N[j - 1] = 0$.
    Replacing this dimer with two horizontal dimers, we establish a bijection between such tilings of $F_N^{(m, n)}$ and tilings of $F_{N + 2^{m - j} + 2^{m - j + 1}}^{(m, n + 1)}$ with property $A_{j - 1}$.
    These cases are counted in line~\ref{line:vert_dimer}.
    
\begin{figure}[h!]
\begin{tikzpicture}  
  \node[] at (-0.2, 1.75) {\textbf{1}};
  \node[] at (-0.2, 1.25) {\textbf{2}};
  \node[] at (-0.2, 0.75) {\textbf{3}};
  \node[] at (-0.2, 0.25) {\textbf{4}};
  \draw[step=0.5cm] (-0.01, 0) grid (3.5,2);
  \draw[pattern=north west lines] (3, 1.5) rectangle (3.5,2);
  \draw[pattern=north west lines] (3.5, 1.5) rectangle (4,2);
  \draw[line width=1mm] (3, 0.5) rectangle (3.5, 1.5);
  \node[] at (3.25,0.75) {\textbf{c}};
  \node[] at (3.25,1.25) {\textbf{d}};

  \draw[->] (4.5, 1) -- (\shift - 0.5,1);

  \node[] at (-0.2 + \shift, 1.75) {\textbf{1}};
  \node[] at (-0.2 + \shift, 1.25) {\textbf{2}};
  \node[] at (-0.2 + \shift, 0.75) {\textbf{3}};
  \node[] at (-0.2 + \shift, 0.25) {\textbf{4}};
  \draw[step=0.5cm] (-0.01 + \shift, 0) grid (3 + \shift,2);
  \draw[step=0.5cm] (3 + \shift, 0) grid (3.5 + \shift,0.5);
  \draw[step=0.5cm] (3 + \shift, 0.99) grid (3.5 + \shift,1.5);
  \draw[step=0.5cm] (3 + \shift, 0.5) grid (4 + \shift,1);
  \draw[pattern=north west lines] (3 + \shift, 0.5) rectangle (3.5 + \shift,1);
  \draw[pattern=north west lines] (3.5 + \shift, 0.5) rectangle (4 + \shift,1);
  \draw[step=0.5cm] (3 + \shift, 1.5) grid (4 + \shift,2);
  \draw[pattern=north west lines] (3 + \shift, 1.5) rectangle (3.5 + \shift,2);
  \draw[pattern=north west lines] (3.5 + \shift, 1.5) rectangle (4 + \shift,2);
  \draw[step=0.5cm] (3 + \shift, 1) grid (4 + \shift,1.5);
  \draw[pattern=north east lines] (3 + \shift, 1) rectangle (3.5 + \shift,1.5);
  \draw[pattern=north east lines] (3.5 + \shift, 1) rectangle (4 + \shift,1.5);
  \draw[line width=1mm] (3 + \shift, 0.5) rectangle (4 + \shift, 1);
  \draw[line width=1mm] (3 + \shift, 1) rectangle (4 + \shift, 1.5);
\end{tikzpicture}
\caption{Tilings of $F_{8}^{(7)}$ with property $A_3$ to tilings of $F_{14}^{(7)}$ with property $A_2$}\label{fig:induction_c_dimer}
\end{figure}
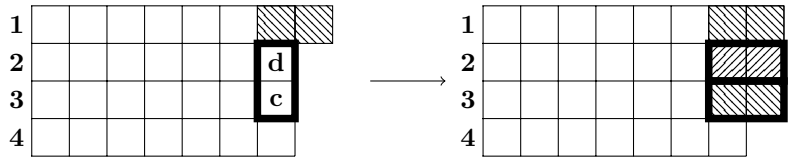

  \end{enumerate}
  
  Since the $A_n$ property is just rigidness, after multiplication by an appropriate degree of $z$ in line~\ref{line:mult_by_bd} we obtain the vector $P^{(m, n + 1)}$.
\end{proof}

\begin{remark}
	Algorithm~\ref{alg:multiplication_transfer} can be parallelized. Consider an iteration of the loop in line~\ref{line:loop_bits}
    with $j > 0$. Then, during the iteration, coordinates of $P[N]$ with different $N[0]$ do not interact, so the whole vector can be 
    divided into two halves (depending on $N[0]$), and these halves can be processed by separate threads.
    Taking into accout $N[1]$, we can divide the work between four threads, and so on.
    In our computation, we used $32$ threads (so, we divided the work based on $N[0], \ldots, N[4]$).
\end{remark}

Finally, using Algorithm~\ref{alg:multiplication_transfer}, we can write a pseudocode for procedure $\operatorname{ComputeTheta}(m, n)$, see Algorithm~\ref{alg:compute_theta}.

\begin{algorithm}
\caption{$\operatorname{ComputeTheta}$}\label{alg:compute_theta}
	\KwIn{Natural numbers $m$ and $n$.}
    \KwOut{Vector of polynomials $[\Theta_{m, 1}(z), \ldots, \Theta_{m, n}(z)]$.}

    $\operatorname{result} := [];$\\
    $P :=$ zero vector of polynomials in $z$ of length $2^m$;\\
    $P[0] : =1;$\\
    \For{$i$ from $1$ to $n$}{
      Apply Algorithm~\ref{alg:multiplication_transfer} to $P$;\\
      Append $P[0]$ to $\operatorname{result}$;
    }
    
    \Return $\operatorname{result}$;
\end{algorithm}


\subsection{Correction terms}\label{subsec:correction}
We can compute more terms of $\Theta(z)$ and, consequently, of $f_2(p)$ if we examine carefully the right-hand side of~\eqref{eq:main_th}. 
Below we write down the first nonzero term of the right-hand side of~\eqref{eq:main_th} for $N = 4, 5, \ldots$
\[
\mathbf{11} z^3, \mathbf{-38} z^4, \mathbf{115} z^5, \mathbf{-309} z^6, \mathbf{759} z^7, \mathbf{-1748} z^8, \mathbf{3847} z^9, \mathbf{-8203} z^{10}, \mathbf{17115} z^{11}, \ldots
\]
Denote the sequence of coefficients by $\{ a_n \}_{n = 1}^{\infty}$. 
Using {\sc Guess} package (\cite{KauersGuess}, for introduction to guessing, see~\cite[\S 4]{KauersToolkit}) we find that this sequence (we computed first $50$ terms) satisfies the following recurrence relation
\begin{equation}\label{eq:recurrence_a}
a_{n + 5} = -6 a_{n + 4} - 14 a_{n + 3} - 16 a_{n + 2} - 9 a_{n + 1} - 2 a_n.
\end{equation}
Using~\eqref{eq:recurrence_a}, we can compute $a_n$ easily, so we get one more correct term of $\Theta(z)$.
Instead of giving a rigorous proof of~\eqref{eq:recurrence_a} which is long and involved, we would like to explain informally why it is natural to expect such a relation.

Formula~\eqref{eq:log_t_lim} shows that the coefficient of $z^s$ in $\Theta(z)$ is a sum of weights of all connected polyominos constructed from $s$ overlapping dimers.
On the other hand, the argument after Lemma~\ref{lem:S_N} shows that the coefficient of $z^s$ in 
\[
S_N - 3 S_{N - 1} + 3 S_{N - 2} - S_{N - 3}
\]
is a sum of weights over all connected polyominos constructed from $s$ overlapping dimers with the sum of height and width at most $N - 2$.
Hence, the coefficient of $z^{N - 1}$ in their difference is a sum of weights of all connected polyominos constructed from $N$ overlapping dimers with the sum of the height and the width exactly $N - 1$ (the sum can not be larger for a connected polyomino).
These requirements on a polyomino are quite restrictive, by a combinatorial argument one can see that all such polyominos are ``of a similar shape'' as those in Figure~\ref{fig:polyominoe}.
More precisely, there exist two cells ($a$ and $b$ in the figure), maybe coinciding, such that
each of them is connected to two sides of an $m \times n$ ($m + n = N - 1$) rectangle by straight lines, and $a$ and $b$ are connected by a path such that at each step the path becomes closer to $b$ (all such paths have the same length).
Counting such polyominos is a standard combinatorial problem (similar counting problems for polyominos are discussed in~\cite[\S 4.7.5]{StanleyVol1}),
that is very likely to result in a formula satisfying a linear recurrence.

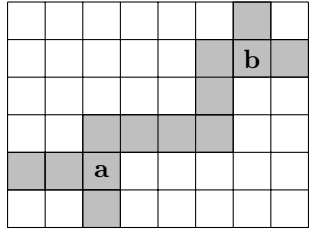
\begin{figure}
\begin{tikzpicture}  
  \draw[fill = lightgray] (0, 0.5) rectangle (0.5,1);
  \draw[fill = lightgray] (0.5, 0.5) rectangle (1,1);
  \draw[fill = lightgray] (1, 0.5) rectangle (1.5,1);
  \draw[fill = lightgray] (1, 0) rectangle (1.5,0.5);
  \draw[fill = lightgray] (1, 1) rectangle (3,1.5);
  \draw[fill = lightgray] (2.5, 1.5) rectangle (3,2.5);
  \draw[fill = lightgray] (3, 2) rectangle (4,2.5);
  \draw[fill = lightgray] (3, 2.5) rectangle (3.5,3);
  \draw[step=0.5cm] (0, 0) grid (4,3);
  \node[] at (1.25,0.75) {\textbf{a}};
  \node[] at (3.25,2.25) {\textbf{b}};
\end{tikzpicture}
\caption{``Large'' and ``thin'' polyomino for $N = 15$}\label{fig:polyominoe}
\end{figure}

Moreover, the same argument shows that there also should be a combinatorial description and a similar recurrence for the second nonzero term in the left-hand side of~\eqref{eq:main_th}, the third, the fourth and so on.
Our data was enough to discover and verify five formulas of this type (from the first until the fifth nonzero term in~\eqref{eq:main_th}).
This is the recurrence for the second nonzero coefficient
\[
b_{n + 7} = -9 b_{n + 6} - 34 b_{n + 5} - 70 b_{n + 4} - 85 b_{n + 3} - 61 b_{n + 2} - 24 b_{n + 1} - 4 b_n.
\]
We omit the others, because they are too large.
However, in our program we do not use recurrences themselves, but the closed form expression for their solutions.
This allows us to compute five more terms of $\Theta(z)$ and, consequently, of $f_2(p)$.



\subsection{Modular computation}\label{subsec:rational_reconstruction} The largest $n$ we used as an input of the algorithm in our computation was $65$.
Taking into account correction terms, this means that $\operatorname{ComputeTheta}$ is invoked with parameters $30$ and $31$.
Hence, the vector $P$ in Algorithm~\ref{alg:multiplication_transfer} will have $2^{30} \approx 10^9$ entries.
Every entry is a polynomial (in our computations it is a truncated polynomial with only $70$ terms), hence in total we have $7.5 \cdot 10^{10}$ integers at every moment.
Since these integers represent the number of tilings of a rectangle, they grow fast, so storing them all exactly would require at least several terabytes of memory.
However, the final result is a list of coefficients of a power series for $f_2(p)$, that is just $65$ rational numbers.
A standard way to deal with such situation (see~\cite[\S 4.2]{KauersToolkit}) is to use computations modulo prime $p$ for intermediate steps.
If $p \leqslant 2^{31} - 1$, then all numbers will fit into $32$ bits, and the whole vector $P$ will occupy just $270$ gigabytes.
Repeating this computation for different primes, we can reconstruct the coefficients of $f_2(p)$ using the chinese remaindering (see~\cite[\S 5.4]{MCA}) and the rational reconstruction procedure (see~\cite[\S 5.10]{MCA}).

The question is how many primes we should take. We start with $2^{31} - 1$, and add new prime numbers until the result of the reconstruction stabilizes.
It turned out that fifteen prime numbers (from~$2^{31} - 1 = 2147483647$ down to~$2147483269$) are enough, however we computed several more in order to make sure that the result is correct.
The correctness of the result is further justified by the comparison in Section~\ref{sec:numerical}.


\section{Numerical results and implementation}\label{sec:numerical}

\subsection{Implementation.}\label{subsec:implement} We implemented most of our algorithm in {\sc Sage} except 
the function $\operatorname{ComputeTheta}$, which was implemented in {\sc C}. See the source code in \url{github.com/pogudingleb/monomer_dimer_tilings}.
Computation modulo one prime with $n = 5$ took about two days using $32$ cores and $270$ gigabytes of memory.
Since we need fifteen primes, the whole computation took about one month.

\subsection{Numerical results.} Table~\ref{table:ak} contains $a_k$'s (defied in~\eqref{eq:lower_bound}) obtain by our computation.
Expanding $(1 - p)\ln (1 - p)$ into Taylor series at $p = 0$, we obtain the following formula expressing 
$b_k$ defined in~\eqref{eq:upper_bound} via $a_k$:
\begin{equation}\label{eq:bk_via_ak}
	b_k = a_k - \frac{1}{k(k - 1)}.
\end{equation}
We introduce following truncated versions of~\eqref{eq:lower_bound} and~\eqref{eq:upper_bound}
\begin{align*}
U_n(p) &= \frac{1}{2} \left( (2\ln 2 + 1)p - p\ln p\right) + \sum\limits_{j = 2}^n b_j p^j,\\
L_n(p) &= \frac{1}{2} \left( (2\ln 2 - 1)p - p\ln p\right) - (1 - p)\ln (1 - p) + \sum\limits_{j = 2}^n a_j p^j.
\end{align*}

All computed $63$ values $a_k$ are positive, all computed $63$ values $b_k$ are negative.
Assuming that this pattern persists, we can write
\[
L_n(p) \leqslant f_2(p) \leqslant U_n(p).
\]
This provides us with lower and upper bound for $f_2(p)$.
We plot both $L_{64}(p)$ and $U_{64}(p)$ together for $p \in [0, 1]$ on Figure~\ref{fig:plot01}. 
Dashed curve on this plot is $-p \ln p - (1 - p) \ln (1 - p)$, 
that is the negative value of the free energy for monomer-monomer problem with two different types of monomers.
We also plot both $L_{64}(p)$ and $U_{64}(p)$ for $p \in [0.9, 1]$ on Figure~\ref{fig:plot091}.

\begin{figure}[h!]
	\centering
	\begin{subfigure}{.5\textwidth}
    \centering
	\includegraphics[width=0.8\textwidth]{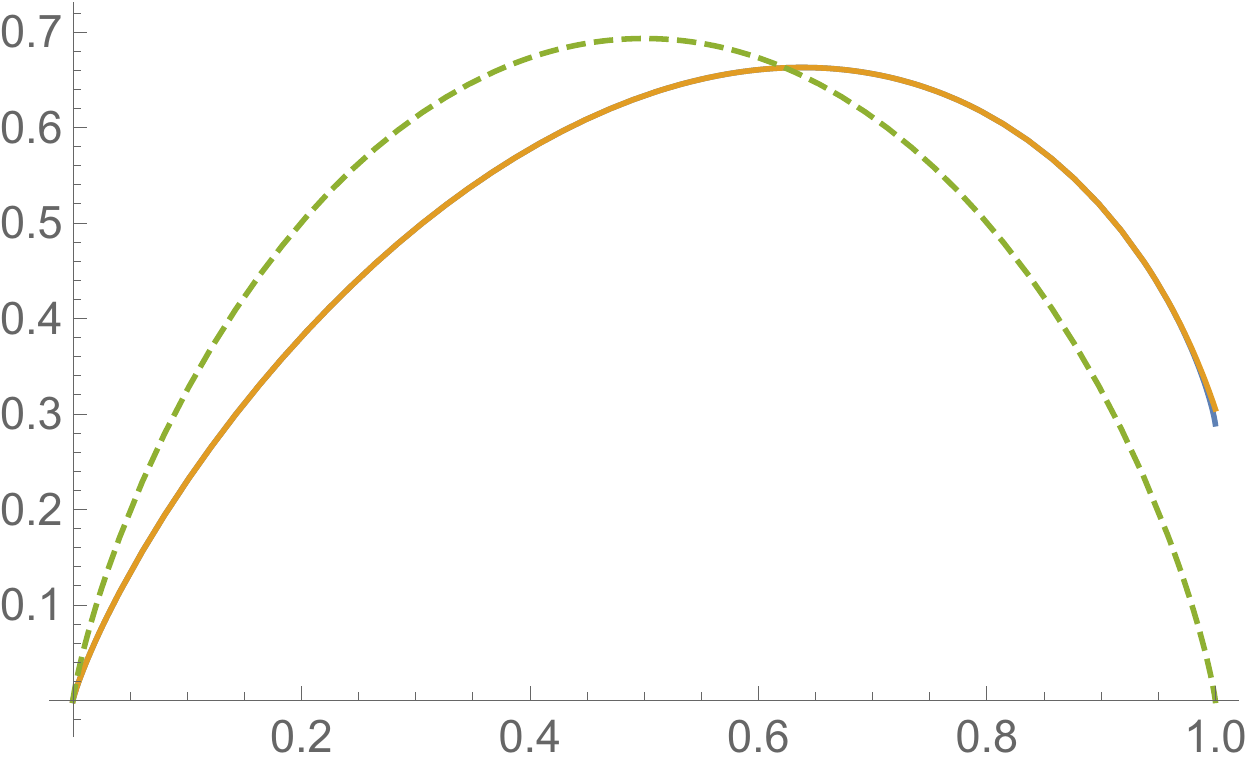}
    \caption{On $[0, 1]$}\label{fig:plot01}
    \end{subfigure}%
    \begin{subfigure}{.5\textwidth}
    \centering
	\includegraphics[width=0.8\textwidth]{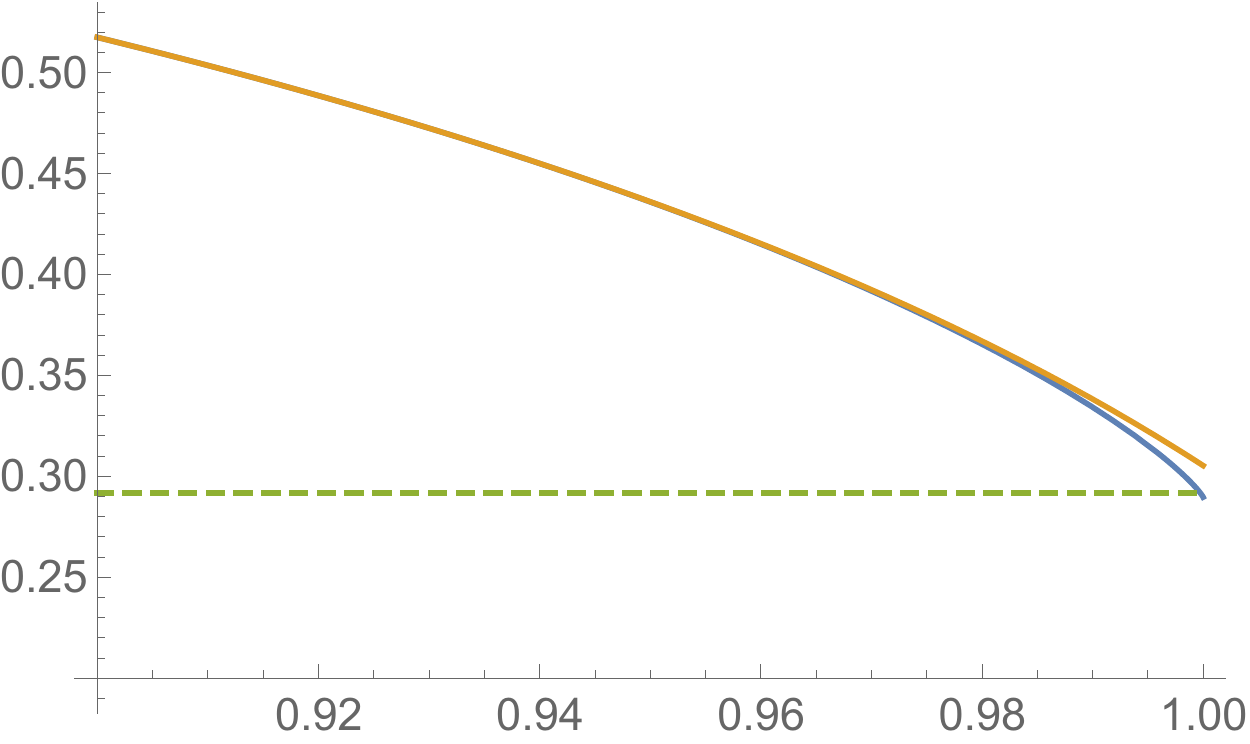}
    \caption{On $[0.9, 1]$, dashed line is $y = f_2(1) = \frac{G}{\pi}$}\label{fig:plot091}
    \end{subfigure}
    \caption{Plots of $L_{64}(p)$ and $U_{64}(p)$}
\end{figure}

Plots of $L_{64}(p)$ and $U_{64}(p)$ in Figure~\ref{fig:plot01} are indistinguishable, the difference between them in Figure~\ref{fig:plot091} is visible only very close to $p = 1$.
On Figure~\ref{fig:plot091} we also see that lower bound is much more accurate at $p = 1$.
The difference $U_{64}(p) - L_{64}(p)$
does not exceed $2.3 \cdot 10^{-16}$ for $p \in [0, 0.5]$ and $2.1\cdot 10^{-6}$ for $p \in [0, 0.9]$.
Note that for $U_{24}(p) - L_{24}(p)$ (these two bounds could be computed using results of~\cite{ButeraFederbushPernici})
 these numbers are~$9.3 \cdot 10^{-11}$ and~$7.5\cdot 10^{-4}$, respectively, so our bound reduces the error by several orders of magnitude.

\begingroup
\renewcommand\arraystretch{1.5}
\begin{longtable}{|c|c|c|c|c|c|}
\caption{Values of $a_k$}\label{table:ak} \\
	\hline
	$k$ & $a_k$ & $k$ & $a_k$ & $k$ & $a_k$ \\ \hline
$ 2 $ & $ \frac{1}{16} $ & $ 23 $ & $ \frac{4312434281365}{17803292276948992} $ & $ 44 $ & $ \frac{18487601206244410582171859}{292772819290992435013642878976} $  \\ \hline
$ 3 $ & $ \frac{1}{192} $ & $ 24 $ & $ \frac{5789230773063}{25895697857380352} $ & $ 45 $ & $ \frac{74150661042096992710148129}{1225560638892526472150132981760} $ \\ \hline
$ 4 $ & $ \frac{7}{1536} $ & $ 25 $ & $ \frac{69044819053441}{337769972052787200} $ & $ 46 $ & $ \frac{297604910587450946018199331}{5125071762641474338082374287360} $ \\ \hline
$ 5 $ & $ \frac{41}{10240} $ & $ 26 $ & $ \frac{272097812497681}{1463669878895411200} $ & $ 47 $ & $ \frac{1194303993371769853836734501}{21411410919479937234655252578304} $ \\ \hline
$ 6 $ & $ \frac{181}{61440} $ & $ 27 $ & $ \frac{1068966474984721}{6323053876828176384} $ & $ 48 $ & $ \frac{4789513328571295127284133845}{89369367316090172805517575979008} $ \\ \hline
$ 7 $ & $ \frac{757}{344064} $ & $ 28 $ & $ \frac{601281977474899}{3891110078048108544} $ & $ 49 $ & $ \frac{19188774086998950351884051009}{372689276467099444040030742380544} $ \\ \hline
$ 8 $ & $ \frac{3291}{1835008} $ & $ 29 $ & $ \frac{16672616519735441}{117021532717594968064} $ & $ 50 $ & $ \frac{76803645872757902332578961121}{1552871985279581016833461426585600} $ \\ \hline
$ 9 $ & $ \frac{14689}{9437184} $ & $ 30 $ & $ \frac{66545602395606901}{501520854503978434560} $ & $ 51 $ & $ \frac{307176141884436645170078617001}{6465018061163969947633186347417600} $ \\ \hline
$ 10 $ & $ \frac{64771}{47185920} $ & $ 31 $ & $ \frac{267471214350929957}{2144433998568735375360} $ & $ 52 $ & $ \frac{1228026136368811312663436458705}{26894475134442114982154055205257216} $ \\ \hline
$ 11 $ & $ \frac{276101}{230686720} $ & $ 32 $ & $ \frac{1080431496491179115}{9149585060559937601536} $ & $ 53 $ & $ \frac{4909003176336757275553467075425}{111796641735328007376797249088520192} $ \\ \hline
$ 12 $ & $ \frac{1132693}{1107296256} $ & $ 33 $ & $ \frac{4374403039126240385}{38959523483674573012992} $ & $ 54 $ & $ \frac{19627584575160129028816787257753}{464386050285208646026696265444622336} $ \\ \hline
$ 13 $ & $ \frac{4490513}{5234491392} $ & $ 34 $ & $ \frac{17705045340400677607}{165577974805616935305216} $ & $ 55 $ & $ \frac{78505240133588264624896189049521}{1927640208731054757091946762222960640} $ \\ \hline
$ 14 $ & $ \frac{17337685}{24427626496} $ & $ 35 $ & $ \frac{71484177460946258777}{702452014326859725537280} $ & $ 56 $ & $ \frac{314123632091141305526902518303973}{7996137162143634547936964346998947840} $ \\ \hline
$ 15 $ & $ \frac{65867621}{112742891520} $ & $ 36 $ & $ \frac{287529593953850293471}{2975090884207876484628480} $ & $ 57 $ & $ \frac{1257288843192384664389299749835521}{33147623144886339580538688565741092864} $ \\ \hline
$ 16 $ & $ \frac{249437227}{515396075520} $ & $ 37 $ & $ \frac{1151710503160001680385}{12580384310364734849286144} $ & $ 58 $ & $ \frac{1677695623930304081656255827713551}{45775289104843040373124855638404366336} $ \\ \hline
$ 17 $ & $ \frac{955110593}{2336462209024} $ & $ 38 $ & $ \frac{4596336312298962012663}{53117178199317769363652608} $ & $ 59 $ & $ \frac{20147683002193594117896886735926057}{568577275196997764634603470034917392384} $ \\ \hline
$ 18 $ & $ \frac{3740591431}{10514079940608} $ & $ 39 $ & $ \frac{18298456303802689186745}{223953508083610054614319104} $ & $ 60 $ & $ \frac{26879884904186172110556704720248631}{784244517513100365013246165565403299840} $ \\ \hline
$ 19 $ & $ \frac{15039656569}{47004122087424} $ & $ 40 $ & $ \frac{72784234597284215364691}{942962139299410756270817280} $ & $ 61 $ & $ \frac{322682332818808295011085893297500673}{9729948929145584189655867681252122296320} $ \\ \hline
$ 20 $ & $ \frac{61727254227}{208907209277440} $ & $ 41 $ & $ \frac{289698911730110389042529}{3965276688335983693036257280} $ & $ 62 $ & $ \frac{1290942327848947576849492154270349133}{40217122240468414650577586415842105491456} $ \\ \hline
$ 21 $ & $ \frac{255640084561}{923589767331840} $ & $ 42 $ & $ \frac{1155125274097244765650075}{16654162091011131510752280576} $ & $ 63 $ & $ \frac{5163832046366445947035366917883833877}{166142865649148204785992652078560829243392} $ \\ \hline
$ 22 $ & $ \frac{50273131919}{193514046488576} $ & $ 43 $ & $ \frac{4616317010648384103125561}{69866240967168649264619323392} $ & $ 64 $ & $ \frac{983546099095446058993477411998292607}{32667107224410092492483962313449748299776} $ \\ \hline
\end{longtable}
\endgroup

\subsection{Comparison with~\cite{Kong1}.}
We already compared our result to the previously known best bound used power series expansion from~\cite{ButeraFederbushPernici}.
However, another method of computing lower and upper bounds for $f_2(p)$ based on the empirically observed inequality~\cite[Eq. 16]{Kong1} for strips was proposed in~\cite{Kong1}.
In this paper bounds for $p = \frac{1}{20}, \ldots, \frac{20}{20}$ were computed (see~\cite[Table II]{Kong1}).
We compare our results with this computation in Table~\ref{table:Kong}.
The table shows that for $p$ close to one Kong's results may be more accurate.
On the other hand, our bound is much more precise for $p \leqslant \frac{17}{20}$.

\begingroup
\renewcommand\arraystretch{1.2}
\begin{table}[h!]
\begin{tabular}{|c|l|l|}
\hline
$p$ & \cite{Kong1} & Our estimate \\
\hline
$10/20$ & $0.633195588930[4-5]$ & $\mathbf{0.633195588930}5251415416[5-6]$ \\
\hline
$11/20$ & $0.650499726669[5-8]$ & $\mathbf{0.650499726669}5759205[7-8]$ \\
\hline
$12/20$ & $0.66044120984[2-5]$ & $\mathbf{0.66044120984}322136[2-4]$ \\
\hline
$13/20$ & $0.6625636470[2-4]$ & $\mathbf{0.6625636470}3101[3-4]$ \\
\hline
$14/20$ & $0.65620036[0-1]$ & $\mathbf{0.65620036}1027[4-5]$ \\
\hline
$15/20$ & $0.64039026[3-5]$ & $\mathbf{0.64039026}42[8-9]$ \\
\hline
$16/20$ & $0.6137181[3-4]$ & $\mathbf{0.6137181}37[2-7]$\\
\hline
$17/20$ & $0.573983[2-3]$ & $\mathbf{0.573983}[2-3]$ \\
\hline
$18/20$ & $0.51739[1-2]$ & $\mathbf{0.51739}[1-3]$ \\
\hline
$19/20$ & $0.435[8-9]$ & $\mathbf{0.435}[8-9]$ \\
\hline
\end{tabular}
\caption{Comparison with~\cite{Kong1}. Digits in square brackets mean the corresponding digit in lower and upper bounds.}\label{table:Kong}
\end{table}
\endgroup

\bibliographystyle{ieeetr}
\bibliography{bibdata}

\end{document}